\newtheorem{theorem}{Theorem}[section]
\newtheorem{lemma}[theorem]{Lemma}
\newcounter{note}[section]
\newtheorem{corol}[theorem]{Corollary}
\DeclareMathOperator{\OPT}{OPT}
\DeclareMathOperator{\Btree}{B-Tree}
\DeclareMathOperator{\OPTbst}{OPT_{BST}}
\DeclareMathOperator{\OPTbtree}{OPT_{B-Tree}}
\DeclareMathOperator{\IB}{IB}
\DeclareMathOperator{\LB}{LB}
\DeclareMathOperator{\ws}{WS}
\newcommand{\ios}{I/Os}
\newcommand*{\rom}[1]{\expandafter\@slowromancap\romannumeral #1@}
\newcommand{\lrp}[1]{\left( #1 \right)}
\title{Belga B-trees%
\footnote{This work was supported by the Fonds de la Recherche Scientifique-FNRS under Grant no MISU F 6001 1 and by NSF Grant CCF-1533564.}}
\author{
Erik D.~Demaine\thanks{CSAIL, Massachusetts Institute of Technology.
\texttt{edemaine@mit.edu}}
\qquad
John Iacono\thanks{Universit\'{e} Libre de Bruxelles and New York University.
\texttt{johniacono@gmail.com}}
\qquad
Grigorios Koumoutsos\thanks{Universit\'{e} Libre de Bruxelles. \texttt{greg.koumoutsos@gmail.com}}
\qquad
Stefan Langerman\thanks{Directeur de Recherches du F.R.S-FNRS. \texttt{stefan.langerman@ulb.ac.be}}
}
\date{}
\begin{document}

\maketitle

\begin{abstract}
We revisit self-adjusting external memory tree data structures, which combine the optimal (and practical) worst-case I/O performances of B-trees, while adapting to the online distribution of queries. 
Our approach is analogous to undergoing efforts in the BST model, where \emph{Tango Trees} (Demaine \emph{et al.} 2007) were shown to be $O(\log\log N)$-competitive with the 
runtime of the best offline binary search tree on every sequence of searches.
Here we formalize the B-Tree model as a natural generalization of the BST model. 
We prove lower bounds for the B-Tree model, and introduce a B-Tree model data structure, the Belga B-tree, that executes any sequence of searches within a $O(\log \log N)$ factor of the best offline B-tree model algorithm, provided $B=\log^{O(1)}N$. We also show how to transform any static BST into a static B-tree which is faster by a $\Theta(\log B)$ factor; the transformation is randomized and we show that randomization is necessary to obtain any significant speedup.
\end{abstract}

\includecomment{onlymain}
\excludecomment{onlyapp}

\section{Introduction}
\label{s:intro}
Worst-case analysis does not capture the fact that some sequences of operations on data structures, often typical ones, can be executed significantly faster than worst case ones. Methods of analyzing algorithms whose performance depends on more fine-grained characteristics of the input sequence other than the size $N$ have been coined \emph{distribution sensitive data structures} \cite{thesis,DBLP:conf/birthday/BoseHM13}. Two general methods to bound the performance of such a data structure exist. The first is to explicitly bound the performance by some bound. For binary search trees (BSTs) there is a rich set of such bounds (see e.g.~\cite{DBLP:journals/acta/ElmasryFI13,ChalermsookG0MS16}) like the sequential access bound
\cite{DBLP:journals/combinatorica/Tarjan85}, the working set bound
\cite{DBLP:journals/jacm/SleatorT85,DBLP:conf/soda/Iacono01a}, the (weighted) dynamic finger bound~\cite{DBLP:journals/siamcomp/ColeMSS00,DBLP:journals/siamcomp/Cole00,DBLP:conf/soda/IaconoL16}, the unified bound
\cite{DBLP:journals/tcs/BadoiuCDI07,DBLP:conf/soda/Iacono01a} and many others~\cite{DBLP:journals/algorithmica/BoseDIL16,DBLP:journals/corr/abs-1302-6914,DBLP:journals/corr/abs-1809-01759}.
The other method is to compare the performance of the data structure on a sequence of operations to the performance of the best offline data structure in some model on the same sequence. Such an analysis uses the language of competitive analysis introduced in~\cite{DBLP:journals/cacm/SleatorT85}, where the competitive ratio of an algorithm is the supremum ratio of the performance of the given algorithm to the offline optimal over all sequences of operations over a given length. A data structure which is $O(1)$-competitive in a particular model is said to be \emph{dynamically optimal}~\cite{DBLP:journals/jacm/SleatorT85}. In the BST model, the best known competitive ratio is $O(\log \log N)$, first achieved by Tango trees~\cite{DBLP:journals/siamcomp/DemaineHIP07}. The existence of a dynamically optimal BST is one of the most intriguing and long-standing open problems in online algorithms and data structures (see~\cite{DBLP:conf/birthday/Iacono13} for a survey). The two prominent candidates to achieve dynamic optimality for BSTs are the \textit{splay tree} of Sleator and Tarjan~\cite{DBLP:journals/jacm/SleatorT85} and the \textit{greedy} algorithm~\cite{DBLP:conf/soda/DemaineHIKP09,jltr}, but they are only known to be $O(\log N)$-competitive.

%

\paragraph{{\normalfont{\textbf{Disk-Access Model (DAM).}}}} The \emph{external memory model}, or \emph{disk-access model (DAM)} \cite{DBLP:journals/cacm/AggarwalV88} is the leading way to theoretically model the performance of algorithms that can not fit all of their data in RAM, and thus must store it on a slower storage system historically known as \emph{disk}. This model is parameterized by values $M$ and $B$; the disk is partitioned into blocks of size $B$, of which $M/B$ can be stored in memory at any given moment. The cost in the DAM is the number of transfers between memory and disk, called Input-Output operations (\ios). The classic data structure for a comparison based dictionary in the DAM model, as well as in practice, is the B-Tree~\cite{DBLP:journals/acta/BayerM72}. The B-Tree is a generalization of the BST, where each node stores up to $B-1$ data items, for $B \geq 2$, and the number of children is one more than the number of data items. The B-Tree supports searches in time $O(\log_B N)$ in the DAM, a $\log B$ factor faster than traditional BSTs such as red-black trees~\cite{DBLP:conf/focs/GuibasS78} or AVL trees \cite{MR0156719}.

\paragraph{{\normalfont{\textbf{Dynamic Dictionaries in the DAM.}}}} Here, our goal is to explore dynamic dictionaries in the DAM and to obtain results similar to those known for BSTs.

Surprisingly, prior work in this direction is quite limited. One previous attempt was in the work of Sherk~\cite{DBLP:journals/jal/Sherk95} where a generalization of splay trees to what we call the B-tree model was proposed, but without any strong results. Over ten years later, Bose et. al.~\cite{DBLP:conf/soda/BoseDL08} studied a self-adjusting version of skip-lists and B-Trees, where nodes can be split and merged to adapt to the query distribution by moving elements closer or farther from the root of the tree (here we call this model \textit{classic self-adjusting} B-trees, see Section~\ref{s:model}). They showed that dynamic optimality in this model is closely related to the working set bound. This bound captures temporal locality: for an access sequence $X = x_1,\dotsc,x_m $, it is defined as $\ws(X) = \sum_{i=1}^{m} \log w_X(i)$, where $w_X(i)$ is the number of distinct elements accessed since the last access to the element $x_i$. In~\cite{DBLP:conf/soda/BoseDL08} the authors presented a data structure whose cost is upper bounded by $O(\ws(X)/\log B)$ and obtained a matching lower bound of $\Omega(\ws(X)/\log B)$ for this model, which implies that their structure is dynamically optimal.

Note that the lower bound of~\cite{DBLP:conf/soda/BoseDL08} shows a major limitation of B-trees with only split and merge operations: It implies there are sequences on which they are slower than BSTs. For example, repeatedly sequentially accessing all data items $1,2,\dotsc,N$ requires $O(1)$ amortized time per search for BSTs like splay trees (this is the \textit{sequential access bound}~\cite{DBLP:journals/combinatorica/Tarjan85}) while the lower bound $\Omega(\ws(X)/\log B)$ implies an amortized cost $\Omega(\log_B N)$ in the classic self-adjusting model. In this work, we show that by adding just one more operation, an analogue of the rotation for B-Trees, we can overcome this limitation and obtain significant speedups with respect to standard B-trees.


\paragraph{{\normalfont{\textbf{Our Contribution.}}}} In this work we initiate a systematic study of dynamic B-trees. First, we formally define the (dynamic) B-Tree model of computation (\S\ref{s:model}). Second, we show how to produce lower bounds in the B-Tree model (\S\ref{s:lb}). Then, we introduce a data structure, which we call the \emph{Belga B-Tree}\footnote{The Tango tree was invented 
on an overnight 
flight from JFK airport en route to Buenos Aires, Argentina. The work on the Belga B-Tree has been substantially completed at Cafe Belga, Ixelles, Belgium.}, which is $O(\log \log N)$ competitive with any dictionary in the B-Tree model of computation, when $B=O(\log^{O(1)}N)$ (\S\ref{s:ods}).

 
More generally, we conjecture the following in \S\ref{s:open}: any BST-model algorithm can be transformed into a (randomized) B-Tree model algorithm with a $\Theta(\log B)$ factor cost savings. This would imply that BST model algorithms such as the splay tree~\cite{DBLP:journals/jacm/SleatorT85} or greedy \cite{DBLP:conf/soda/DemaineHIKP09,jltr} would have B-Tree model counterparts, and that a dynamically optimal BST-model algorithm would imply a dynamically optimal algorithm in the B-Tree model.  We leave this conjecture open, but in \S\ref{s:tranform} we do resolve the case of a static (no rotations allowed) BSTs by showing a randomized transformation from a static BST to a static B-Tree such that any algorithm in the static BST model would have factor $\Theta(\log B)$ speedup in the B-Tree model. We also show that no $\omega(1)$-factor speedup is possible for a deterministic transformation in general.






 
 

\section{The B-Tree model of computation} \label{s:model}




 In this section, we define the tree models discussed in this paper. In all cases, we consider data structures supporting searches over a universe of $N$ elements $\mathcal{U} = \lbrace 1,2,\dotsc,N \rbrace$ which we refer to as \textit{keys}. The input is a valid tree $T_{0}$ and request sequence of searches $X = x_1,x_2 \dotsc, x_m$, where $x_i \in \mathcal{U} $ is the $i$th item to be searched.

 \subsection{The BST Model} 
 
 In a Binary Search Tree (BST) data structure, each node stores a single key and three pointers, indicating its parent and its (left and right) children. The key value of a node is larger than all keys in its left subtree and smaller than all keys in its right subtree.  
 To execute each request to search for element $x_i$, a BST algorithm initializes a single pointer at the root (at unit cost)  then may perform any sequence of the following unit-cost operations:
 \begin{itemize}
  \item Move the pointer to the parent or to the left or right child of the current node the pointer points to (if such a destination node exists).
  \item Perform a \textit{rotation} of the edge between current node and its parent (if not the root).
 \end{itemize}
 Whenever the pointer moves to or it is initialized to a node $v$, we say that node $v$ is \textit{touched}. A BST-model search algorithm  is correct if during each search, the element $x_i$ that is being searched for is touched. The cost of a BST algorithm on the search sequence $X$ equals the total number of unit-cost operations performed to execute the searches in the sequence. This model was formally defined in ~\cite{DBLP:journals/siamcomp/DemaineHIP07} and it is known to be equivalent up to constant factors to several alternative models which have been considered (e.g.~\cite{DBLP:conf/soda/DemaineHIKP09,DBLP:journals/siamcomp/Wilber89}).

 
 A BST data structure can be augmented such that each node stores $O(\log N)$ additional bits of information. The running time of such BST data structures in the RAM model is dominated by the number of unit-cost operations.
%
  A \textit{static} BST  is a restricted version of the BST model where rotations are not allowed and thus the shape of the tree never changes.

 \subsection{The B-tree model} 
 
 We define the B-tree model to be a generalization of the BST model which allows more than one key to be stored in each node. The B-tree model is parameterized by a positive integer $B\geq 2$ which represents the maximum number of children of each node\footnote{Recall that in the external memory model (defined in Section~\ref{s:intro}) B denotes the block size. Each B-tree node has at most $B$ children, contains $O(B)$ words and thus it can be stored in $O(1)$ blocks of size B.}; in the case where $B=2$ the B-tree model will be equivalent to the BST model. We denote by $n(v)$ the number of keys stored in a node $v$. Every node $v$ has $n(v)\leq B-1$ and $n(v)+1$ child pointers (some of which could be null). A node $v$ which stores exactly $n(v) = B-1$ keys is called \textit{full}. 
 
 Suppose $x_1,\dotsc,x_{n(v)}$ are the keys stored at node $v$ and $c_1,\cdots,c_{n(v)+1}$ are the children of $v$. 
 Keys satisfy the in-order condition, i.e. $x_1 < \dotsc < x_{n(v)}$ and for any key $k_i$ stored in the subtree $T_{c_i}$ rooted at $c_i$, we have that $ k_1 < x_1 < \cdots < k_i < x_i < k_{i+1}< \cdots < k_{n(v)} < x_{n(v)} < k_{n(v)+1}   $. 
 
 
 Similar to the BST model, to execute each search there is a single pointer initialized to the root of the tree at unit cost. To execute a search for $x_i$, a B-tree algorithm performs a sequence of the following unit-cost operations which are described formally later:
 
 \begin{itemize}
  \item Move the pointer to a child or to the parent of the current node.
  \item Split a node containing at least three keys.
  \item Join two sibling nodes storing no more than $ B-2 $ keys in total.
  \item Rotate the edge between the current node and its parent.
 \end{itemize}
 
B-tree model algorithms that only use the first type of operations are referred  to as \emph{static} as the shape of the B-tree does not change.
 We now fully describe the unit-cost operations of rotating, splitting and joining:  
 
 \begin{description}
 \item[Rotations:] Consider a (non-root) node $u$ and let $p(u)$ be its parent. Let $P =\lbrace p_1, \dotsc, p_m \rbrace$ be the union of all keys stored in $u$ and $p(u)$. The keys stored at $u$ define an interval $[p_{\ell},p_r]$ in $P$. A rotation of the edge $(p(u),u)$ essentially updates this interval to $[p_{\ell'},p_{r'}]$, moving the keys as needed. Depending on the values of $\ell, \ell'$ and $r, r'$ we characterize a rotation as a promote/demote left --- promote/demote right rotation. For example, a rotation of the type promote left $k$ --- demote right $k'$ sets $\ell' = \ell + k$ (i.e. the $k$ leftmost keys of $u$ are promoted to $p(u)$) and $r' = r + k'$ (i.e. keys $p_{r+1},\dotsc,p_{r+k'}$ are demoted to $u$). Values $k$ and $k'$ should be non-negative and satisfy that after the rotation both $u$ and $p(u)$ have at most $B-1$ keys. Rotations of the type demote left --- promote right, promote left --- promote right and demote left --- demote right can be defined analogously. As an example, figure~\ref{fig:rotation} shows a rotation of type demote left - promote right.

\item[Splitting a node:] Let $u$ be a node (except the root) containing at least three keys and let $p(u)$ be its non-full parent. Splitting node $u$ at key $u_m$ (which is not the smallest or the largest key stored at $u$) consists of promoting $u_m$ to $p(u)$ and replacing $u$ by 2 nodes $u_L, u_R$ such that keys smaller than $u_m$ are contained in $u_L$ and keys larger than $u_m$ are in $u_R$. To split the root (given that it stores at least three keys), we create an empty B-tree node, make it the parent of the root (i.e. the new root) and then perform a split operation as defined above.

\item[Join:] This operation is the inverse of a split. Let $u$ and $v$ be two sibling nodes and let $p$ be their parent, such that there exists a unique key $p_j$ in $p$ such that $p_j$ is larger than all keys stored at $u$ and smaller than all keys stored at $v$. Joining nodes $u$ and $v$ (given that they store no more than $B-2$ keys in total) consists of demoting $p_j$ to $u$ (and deleting it from $p$), adding all elements of $v$ (including the pointers to children) to $u$ and deleting $v$. Note that after a join operation $p$ might become empty (in case $p_j$ was the unique key of $p$). In that case, we set the parent of $u$ to be the parent of $p$ (if it exists) and we delete $p$. If $p$ is empty and it is the root, then we just delete $p$ and $u$ becomes the new root of the tree. 
\end{description}

\begin{figure}[t]
\begin{center}

\includegraphics[scale=.85]{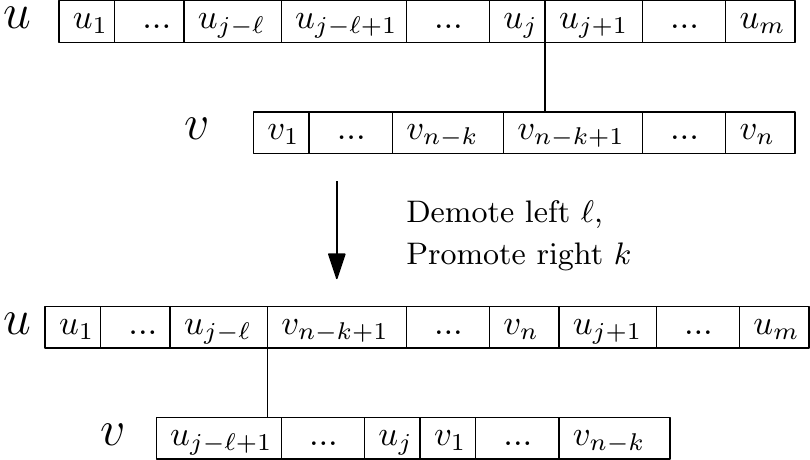} %
\caption{A rotation of a B-tree edge $(u,v)$ of the type demote left $\ell$ --- promote right $k$: From the left of $v$, the $\ell$ neighboring keys of $u$, $ u_{j-\ell+1},\dotsc,u_j $ are getting demoted to $v$. From the right, the $k$ last elements of $v$, $ v_{n-k+1},\dotsc,v_n$ are getting promoted to $u$.}
\label{fig:rotation}
\end{center}
\end{figure}


A  B-tree can be augmented with additional $O(B \log N)$ bits of information for each node. The performance of B-trees in the \textit{external memory model} with blocks of size $B$, is within a  constant factor of the sum of the unit-cost operations as we have defined them.

\paragraph{{\normalfont{\textbf{Relation with other B-tree models.}}}} The classic structure of B-trees first appeared in \cite{DBLP:journals/acta/BayerM72}. In this framework, all leaves have the same depth and no join, split and rotate operations are performed during searches (to be precise, restricted versions of split and join were defined in order to support insertions and deletions and were not allowed for performing search operations, see \cite{CLRS} for an extensive treatment). We call this framework the \textit{classic B-tree} model.
 
 A more flexible model of B-trees was considered in~\cite{DBLP:conf/soda/BoseDL08}: We start with a classic B-tree and an algorithm is allowed to perform joins and splits, but not rotations. Note that by performing join and split operations, the property that all leaves of the tree have the same depth is maintained throughout the whole execution. This model was called  ``self-adjusting B-trees''. To avoid confusion with our dynamic B-tree model, we call this model \textit{classic self-adjusting} B-trees, in order to emphasize that all leaves have the same depth, as in classic B-trees. The self-adjustment relies on the fact that using joins and splits the algorithm might choose to bring an item closer to the root or demote it farther from the root. Also, note that the number of nodes in a B-tree on $N$ keys is not fixed (as opposed to BSTs where we always have exactly $N$ nodes) and the split/join operations might increase/decrease the number of nodes of the tree, changing thus its shape.



 For the rest of this paper, whenever we use the term B-tree we refer to our B-tree model, unless stated otherwise.

\section{Lower bounds: simulating dynamic B-trees using BSTs.}\label{s:lb}

In this section we show how to simulate a dynamic B-tree algorithm using a BST-model algorithm with an $O(\log B)$ overhead in the cost. This will allow us to transform lower bounds from the BST model into lower bounds for the B-tree model.

\paragraph{{\normalfont{\textbf{Notation.}}}} For a search sequence $X$, we denote $\OPTbst(X)$ and $\OPTbtree(X)$ the optimal (offline) cost to serve $X$ using a BST-model and a B-tree-model data structure respectively. 

\begin{theorem}
\label{thm:opt_relation}
For any search sequence $X$,  $\OPTbst(X) = O(\OPTbtree(X) \cdot \log B  )$.
\end{theorem}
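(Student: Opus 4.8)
The plan is to show that a B-tree algorithm can be simulated by a BST-model algorithm paying only an $O(\log B)$ factor, by locally replacing each B-tree node with a small BST gadget and simulating the B-tree's pointer moves, rotations, splits, and joins with bounded-cost BST operations. Concretely, I would fix an optimal offline B-tree algorithm $\mathcal{A}$ for $X$ of cost $\OPTbtree(X)$, and build a BST-model algorithm $\mathcal{A}'$ that maintains, at every point in time, a BST $T'$ obtained from $\mathcal{A}$'s current B-tree $T$ by expanding each B-tree node $v$ (holding $n(v)\le B-1$ keys) into a balanced BST on those keys of height $O(\log B)$, with the child subtrees of $v$ hung off the appropriate leaves/gaps of the gadget. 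The key quantitative fact is that traversing one B-tree node costs the BST $O(\log B)$ pointer moves instead of $O(1)$, so any search that $\mathcal{A}$ serves by visiting $k$ B-tree nodes is served by $\mathcal{A}'$ in $O(k\log B)$ BST steps; summing over the sequence gives the claimed $O(\log B)$ overhead on the search-traversal part of the cost.

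The main work is to check that each structural operation of the B-tree model is simulated within $O(B)$ or $O(\log B)$ amortized BST operations, so that restructuring also blows up by at most an $O(\log B)$ factor (in fact $O(B)$ per op suffices if we are willing to also note the B-tree op itself is unit cost, but $O(B)=O(\log B)\cdot O(B/\log B)$ — here I would instead argue each B-tree op touches $O(B)$ keys and each key can be reinserted into a gadget in $O(\log B)$ BST steps, giving $O(B\log B)$, which is still only an $O(\log B)$ factor against the true cost of the op if one charges carefully; alternatively, and more cleanly, observe the gadget rebuild cost is $O(B)$ BST operations and amortize). For a \emph{rotate} of edge $(p(u),u)$: only the gadgets of $u$ and $p(u)$ change, and only $O(B)$ keys move between them, so rebuild both gadgets from scratch in $O(B)$ BST steps (using the standard fact that a balanced BST on $m$ sorted keys can be built with $O(m)$ rotations/insertions). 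For a \emph{split} of $u$ at $u_m$: promote $u_m$ into $p(u)$'s gadget and split $u$'s gadget into two balanced gadgets $u_L,u_R$ — again $O(B)$ BST steps touching only the two affected nodes' expansions. For a \emph{join}: symmetric, merge the two sibling gadgets and demote the separating key, $O(B)$ steps, with the subtlety that a B-tree node may become empty and get deleted — in the BST this is just deleting an empty (now trivial) gadget and reconnecting a pointer, $O(1)$ plus a possible re-hang of one subtree. Since each B-tree operation has unit cost to $\mathcal{A}$ and is simulated in $O(B)$ BST steps, and $O(B)$ is \emph{not} $O(\log B)$ in general, the honest route is: charge the $O(B)$ gadget-rebuild to the fact that a B-tree node of size $\Theta(B)$ must have been filled by $\Theta(B)$ prior promotions/demotions/splits each of unit B-tree cost — a potential argument on $\sum_v (\text{gadget imbalance})$ — yielding $O(1)$ amortized BST steps per unit of B-tree restructuring cost, hence overall $\mathcal{A}'$ costs $O(\OPTbtree(X)\cdot\log B)$.

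The remaining point is that $\mathcal{A}'$ is a \emph{legal} BST-model algorithm in the sense of the model defined above: it maintains the in-order invariant (which holds because the gadget for $v$ is an in-order BST on $v$'s keys and the children are hung in the gaps in sorted order, exactly matching the B-tree in-order condition), it only uses the pointer-move and single-edge-rotation primitives, and it touches the searched key $x_i$ during search $i$ (because $\mathcal{A}$ does, and $x_i$ lives somewhere in the gadget of whichever B-tree node $\mathcal{A}$ touches it in, which $\mathcal{A}'$ visits). Finally, since $\mathcal{A}'$ is \emph{some} legal BST algorithm serving $X$ with cost $O(\OPTbtree(X)\cdot\log B)$, we get $\OPTbst(X)\le \text{cost of }\mathcal{A}' = O(\OPTbtree(X)\cdot\log B)$, which is the statement.

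The step I expect to be the main obstacle is making the restructuring accounting genuinely give $O(\log B)$ and not $O(B)$: a naive "rebuild the gadget" costs $\Theta(B)$ BST operations per B-tree rotation/split/join, which only yields an $O(B)$ overhead. The fix is either (i) to implement each B-tree structural operation incrementally so that only $O(\log B)$ BST keys actually move per unit-cost B-tree operation (e.g. a "promote left $k$" with $k$ small moves only $O(k\log B)$ keys and is charged to the $k$ keys), handling the adversarial case $k=\Theta(B)$ by a separate potential-function amortization that pays for the big gadget rebuilds out of the B-tree work that must have created such a large node; or (ii) to strengthen the simulation to keep each B-tree node's keys in a BST that supports split/merge/bulk-move in $O(\log B)$ rotations (a balanced search tree with $O(\log B)$-time split and join, realizable in the BST model), so that \emph{every} B-tree operation maps to $O(\log B)$ BST operations directly. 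Option (ii) is cleaner and I would pursue it: the content is exactly that a weight-balanced or red-black-like BST on $\Theta(B)$ keys supports the B-tree primitives (insert one key, delete one key, split at a key, concatenate two, move a contiguous block of keys) in $O(\log B)$ rotations and pointer moves, which is standard, and then the theorem follows by the summation above with a clean $O(\log B)$ factor throughout.
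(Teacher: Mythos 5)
Your option (ii) is exactly the paper's construction: each B-tree node becomes a balanced (red-black) gadget of depth $O(\log B)$ augmented with subtree-size counters, and every unit-cost B-tree operation---pointer move, rotation with arbitrary promote/demote counts, split, join---is implemented by $O(1)$ red-black split/concatenate/rank-select operations, hence $O(\log B)$ BST cost each, which is the whole theorem. Your fallback option (i) (amortizing an $O(B)$ gadget rebuild against the work that filled the node) would not have worked---a single unit-cost rotation can shuttle $\Theta(B)$ keys back and forth between two already-full adjacent nodes arbitrarily often, so no potential charged to past B-tree cost can pay for it---and you were right to commit to (ii).
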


\begin{proof}
 We simulate a B-tree execution of $X$ using a BST in the following way: Each node of the B-tree is simulated by a red-black tree of depth $O(\log B)$. Thus our BST is a tree of red-black trees. We also augment the red-black tree data structure such that each node stores a counter on the number of keys in its subtree. Note that in this tree-of-trees, leaves of a red-black tree might have children, which are the roots of other red-black trees. To distinguish the leaves of each tree, we mark the root of each red-black tree. We also use the parent-child terminology for those red-black trees, i.e., if $U$ and $V$ are red-black trees corresponding to B-tree nodes $u$ and $v$ respectively such that $u$ is a child of $v$, we will say that ``tree $U$ is a child of tree $V$''. 
 
 It remains to show that each unit-cost B-tree operation can be simulated in time $O(\log B)$ using our tree-of-trees BST data structure. Moving the pointer from a B-tree node to an adjacent node corresponds to moving the BST pointer from the root of one red-black tree to the root of its child/parent. This can be done in $O(\log B)$ time, since the depth of our red-black trees is $O(\log B)$. For the other unit-cost operations showing this is more complicated. In order to keep the presentation as simple as possible, we proceed as follows: we first describe some basic properties of red-black trees, we then use them to develop operations of merging and separating red-black which will be useful in out tree-of-trees construction and finally we show how to implement the B-tree unit-cost operations using all those tools.

 \paragraph{{\normalfont{\textbf{Background on red-black trees.}}}} We note that red-black trees on $k$ nodes support split and concatenate operations, as well as finding the $\ell$th largest (or smallest key) in time $O(\log k)$~\cite{CLRS}. We now describe those operations. 
\begin{itemize}
 \item The \textit{split} operation of a red-black tree $T$ at a node $x$ re-arranges the tree such that $x$ is the root and the left and right subtrees are red-black trees including keys of smaller and larger values than $x$ respectively. 
 \item \textit{Concatenating} two red-black trees $T_1,T_2$ whose roots are children of a common node $x$, consists of re-arragning the subtree of $x$ to form a red-black tree on all keys of $ T_1 \cup T_2 \cup \lbrace x \rbrace$. This operation is also referred as \textit{concatenating at $x$} and it can be defined even if one of $T_1,T_2$ is empty. Particularly, in our tree of trees construction, if we concatenate at a node $x$ whose left (right) child is marked, then we treat its left (right) subtree as empty. 
 \item \textit{Find the key with a given rank:} Given an augmented red-black tree on $k$ nodes, where each node stores the number of keys in its subtree and a value $\ell < k$, we can find its $\ell$th largest (or smallest) key in $O(\log k)$ time (see e.g.~\cite[Chapter 14]{CLRS}).

 \end{itemize}

 \paragraph{{\normalfont{\textbf{Combining and Separating red-black trees.}}}} We now develop two procedures that will be useful in our implementation of B-tree unit cost operations. In particular we show how to merge and separate red-black trees in $O(\log k)$ time, where $k$ is the total number of nodes in the trees involved. 
 
 \begin{enumerate}[(i)]
  \item \textit{Merge(S,T)}: Given two red-black trees $S$ and $T$ such that $T$ is a child of $S$, merge them into one valid red-black tree. We describe an implementation of this operation in $O(\log k)$ time, where $k$ is the total number of nodes of $S$ and $T$. Let $y_T$ be the root of $T$. We can find the predecessor $\ell$ and the successor $r$ of $y_T$ in $S$ in $O(\log k)$ time, by searching for the key value of $y_T$ in $S$. Note that either $\ell$ or $r$ might not exist. We split $S$ at $\ell$ (if it exists) and then split the right subtree in $r$ (if it exists). Now, $T$ is the left subtree of $r$ (if $r$ does not exist, $T$ is just the right subtree of $\ell$). Unmark the root of $T$. Then, concatenate at $r$ (skip this step if $r$ does not exist) and finally concatenate at $\ell$ (if it exists). The result is a valid red-black tree containing all keys of $S$ and $T$. We used a constant number of $O(\log k)$-time operations. 
  
  \item \textit{Separate(T,$\ell$,$r$):} Given a red-black tree $T$, separate keys with values in the interval $[\ell,r]$, i.e. split $T$ into two trees $T_1, T_2$ where $T_2$ contains keys with values in the interval $[\ell,r]$ and $T_1$ is a parent of $T_2$. In case $\ell$ is not specified ($\ell =null$), we think of $\ell$ as being the minimum key value in $T$ and this operation separates all keys with value at most $r$. Symmetrically, if $r$ is $null$, we think of $r$ as being the maximum key value in $T$ and this operation separates keys with value at least $\ell$. We implement this as follows. Let $\ell'$ be the predecessor of $\ell$ in $T$ (if exists) and $r'$ the successor or $r$ (if exists). Split $T$ in $\ell'$ (skip this step if $\ell'$ does not exist) and then split the subtree with values larger than $\ell'$ at $r'$ (skip this step if $r'$ does not exist). As a result the left subtree of $r'$ (or the right subtree of $\ell'$ if $r'$ does not exists) is the tree $T_2$ containing all keys in $[\ell,r]$. Mark the root of $T_2$. Then concatenate at $r'$ (if exists) and finally concatenate at $\ell'$ (if exists). As a result we get a valid red-black tree $T_1$ which is the parent of red-black tree $T_2$ containing all keys of the interval $[\ell,r]$. 
 \end{enumerate}

 \paragraph{{\normalfont{\textbf{Simulating the unit-cost operations.}}}}
 We now proceed on showing how to simulate B-tree rotations, splits and joins using our tree of red-black trees data structure with cost $O(\log B)$. In all cases, the total number of keys in the trees involved is $O(B)$ and we perform a constant number of operations which take time $O(\log B)$.
 \begin{itemize}
  \item \textbf{Rotations.} We show how to implement a rotation of the form demote left $\ell$ - promote right $k$ (assuming valid values of $\ell$ and $k$). The other operations are defined analogously. Let $(u,v)$ be the B-tree edge which is rotated, where node $u$ is parent of $v$ and let $U$ and $V$ be the augmented red-black trees corresponding to $u$ and $v$. Let $u_1, \dotsc, u_j , u_{j+1}, \dotsc, u_{m}$ and $ v_1, \dotsc, v_{n} $ be the key values stored in $u$ and $v$ respectively such that for all $v_i$ we have that $ u_j < v_i < u_{j+1} $, similar to the example in figure~\ref{fig:rotation}. The rotation corresponds to promoting to $U$ the $k$ largest keys of $V$, i.e. $v_{n-k+1},\dotsc,v_{n}$ and demoting to $V$ the keys $u_{j-\ell+1},\dotsc,u_{j}$.
    We implement such a rotation as follows (see figure~\ref{fig:rotation_b-bst} for an illustration): We start by promoting the $k$ elements to $U$. Find $v_{n-k}$, i.e. the $(k+1)$th largest key stored at $V$. Then, Separate($V,null,v_{n-k}$) to get a tree $V_1$ containing keys $v_{n-k+1},\dotsc,v_{n}$ and a tree $V_2$ with the rest keys of $V$. $V_2$ is a child of $V_1$ and $V_1$ is a child of $U$. Now, we merge $U$ and $V_1$ to get a new tree $U'$, such that $V_2$ is a child of $U'$. It remains to demote $u_{j-\ell+1},\dotsc,u_{j}$ to $V_2$. To do that, we split $U'$ at $v_{n-k+1}$. Let $U_L$ and $U_R$ be the two subtrees of $v_{n-k+1}$ in $U'$. Note that $u_{j-\ell+1},\dotsc,u_{j}$ are the $\ell$ largest keys of $U_L$. Find $u_{j-\ell+1}$, i.e., the $\ell$th largest key of $U_L$ and Separate($U_L,u_{j-\ell+1},null$). We get a separate tree $U_{L_2}$ containing $u_{j-\ell+1},\dotsc,u_{j}$. Mark the root of $U_{L_2}$. Now, $V_2$ is a child of $U_{L_2}$, so we can merge them to form $V''$, the tree corresponding to B-tree node $v$. Finally we concatenate at the root $v_{n-k+1}$, to form the final tree corresponding to $u$, denoted by $U''$, where $V''$ is a child of $U''$.     
  
  \item \textbf{Splitting a node of a B-tree.} Let $u$ be the node which we want to split and $p(v)$ its parent. Let also $U$ and $P$ the corresponding red-black trees, where $U$ is a child of $P$. Let $u_m$ be the median key value of $U$. We split $U$ at $u_m$, so that $u_m$ is the root with subtrees $U_L$ and $U_R$. Mark the roots of $U_L$ and $U_R$ and then merge $u_m$ (which is a single-node red-black tree) with $P$. Clearly all those operations can be performing in $O(\log B)$ time.

  \item \textbf{Joining two sibling nodes.} This is the inverse operation of splitting so the sequence of operations can be seen as the symmetric of the ones performed in splitting. Let $u$ and $v$ be the sibling B-tree nodes that we want to join, and $p$ their parent, with $U$, $V$ and $P$ the corresponding red-black trees in our binary search tree. $U$ and $V$ are children of $P$ and there is a unique key $p_j$ in $P$ such that keys stored at $U$ are smaller than $p_j$ and keys stored at $V$ are larger. Thus, $p_j$ is the successor of the root of $U$ in $P$ and we can find it in $O(\log B)$ time. We then Separate$(P,p_j,p_j)$. Now we get a new tree $P_1$ containing all keys of $P$ except from $p_j$, and $p_j$ is a single-node red-black tree, child of $P_1$. $U$ and $V$ are the left and right children of $p_j$. We unmark the roots of $U$ and $V$ and concatenate at $p_j$, to get a new tree $U'$ and mark its root. Now $U'$ corresponds to the join node of $u$ and $v$, and it is a child of the red-black tree $P'$ which corresponds to the parent node in the B-tree. We performed a constant number of operations each of which takes time $O(\log B)$.
 \end{itemize}
 \qedhere
\end{proof}

\begin{figure}[t]
\begin{center}
\includegraphics[scale= 0.6]{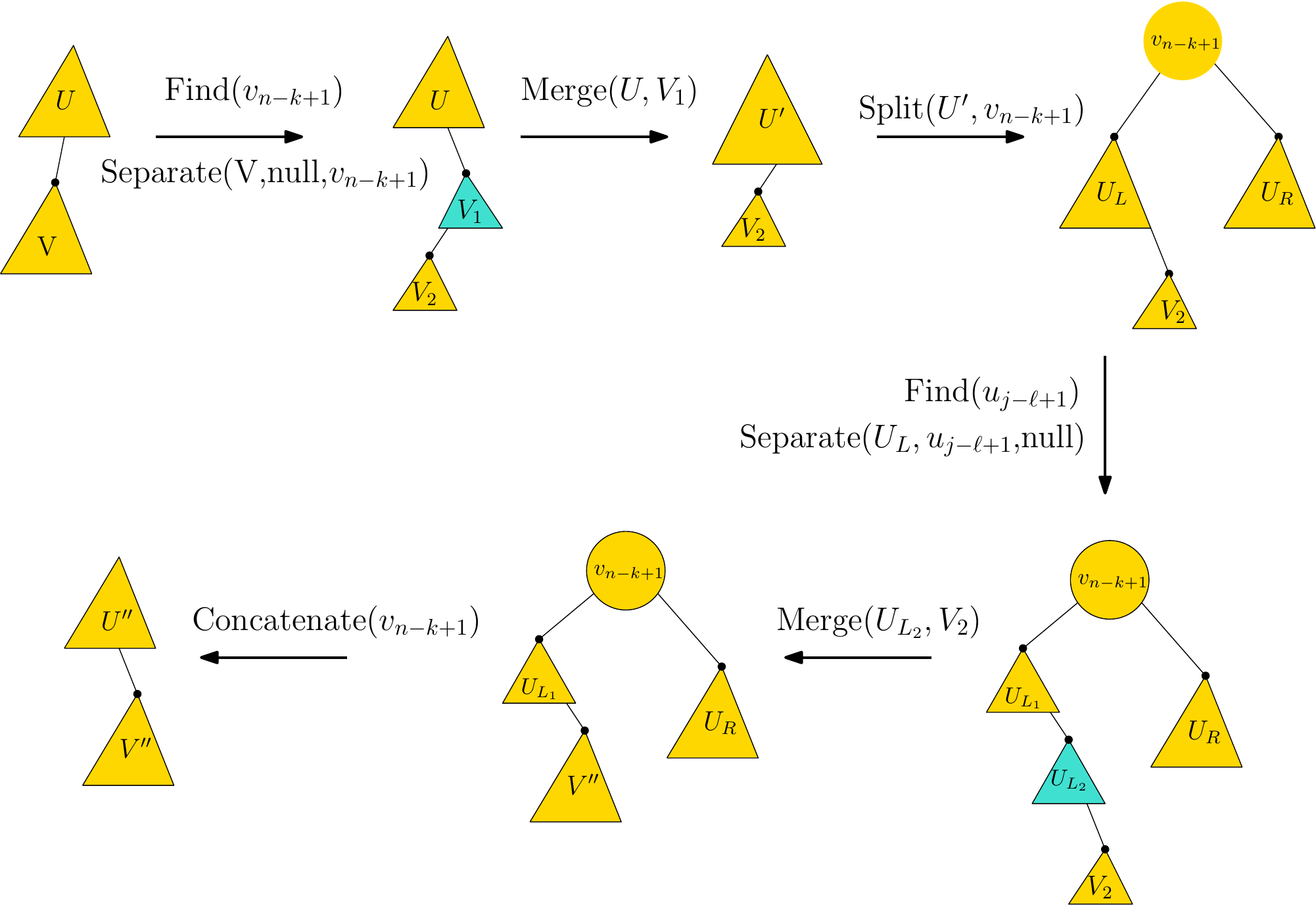}
\caption{Simulating a rotation of a  B-tree edge $(u,v)$ of the type demote left $\ell$ - promote right $k$ in the BST model using red-black tree operations of merge, separate, split, concatenate and find a key with a given rank.}
\label{fig:rotation_b-bst}
\end{center}
\end{figure}

Theorem~\ref{thm:opt_relation} implies that we can transform any lower bound for binary search trees to a lower bound for dynamic B-trees, as shown in the following corollary.

\begin{corol}
\label{cor:lower_bounds}
Let $X$ be a search sequence and let $\LB(X)$ be any lower bound on the cost of executing $X$ in the BST model. 
Then we have that $\OPTbtree(X) = \Omega\lrp{\frac{\LB(X)}{\log B}}$.
\end{corol}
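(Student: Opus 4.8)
The plan is to derive Corollary~\ref{cor:lower_bounds} directly from Theorem~\ref{thm:opt_relation} by a short chain of inequalities. First I would recall the setup: $\LB(X)$ is, by hypothesis, a valid lower bound on the BST-model cost of serving $X$, and in particular on the optimal offline BST cost, so $\OPTbst(X) \geq \LB(X)$ (up to the constant hidden in the ``lower bound'' convention; if $\LB$ is only a lower bound up to constants, absorb that constant into the $\Omega$). Theorem~\ref{thm:opt_relation} gives a constant $c$ with $\OPTbst(X) \leq c \cdot \OPTbtree(X) \cdot \log B$. Chaining these, $\LB(X) \leq \OPTbst(X) \leq c \cdot \OPTbtree(X)\cdot \log B$, and rearranging yields $\OPTbtree(X) \geq \frac{1}{c}\cdot\frac{\LB(X)}{\log B} = \Omega\!\lrp{\frac{\LB(X)}{\log B}}$, which is exactly the claimed bound.

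The only subtlety worth spelling out is why $\OPTbst(X) \geq \LB(X)$, i.e. why a lower bound ``in the BST model'' also lower-bounds the \emph{optimal offline} BST cost. This is immediate from the meaning of a lower bound $\LB(X)$: it is a quantity that provably does not exceed the cost of \emph{any} BST-model algorithm executing $X$, online or offline, and $\OPTbst(X)$ is by definition the cost of the best such (offline) algorithm. Hence $\LB(X) \le \OPTbst(X)$ with no loss. One should also note $B$ is a fixed parameter $\geq 2$; for $B=2$ the statement is vacuous since $\log B$ is a constant (or one works with $\log B$ meaning $\log_2 B$ and the $B=2$ case collapses to the BST model where the bound is trivially true), and for $B \geq 3$ we have $\log B \geq 1$ so the division is harmless.

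There is no real obstacle here — the content is entirely carried by Theorem~\ref{thm:opt_relation}, whose proof (the tree-of-red-black-trees simulation) is the substantive ingredient. The corollary is a one-line consequence, and the ``hard part,'' such as it is, amounts only to being careful about constants and about the direction of the inequality relating a generic lower bound $\LB(X)$ to $\OPTbst(X)$.
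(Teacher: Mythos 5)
Your proof is correct and is essentially identical to the paper's: both chain $\LB(X) \leq \OPTbst(X) = O(\log B)\cdot\OPTbtree(X)$ from Theorem~\ref{thm:opt_relation} and rearrange. The extra remarks about constants and the $B=2$ case are fine but not needed.
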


\begin{proof}
Since $\LB(X)$ is a lower bound on $\OPTbst(X)$, we have that
$  \LB(X) \leq \OPTbst(X) = O(\log B) \cdot \OPTbtree(X)$ ,
 which implies $\OPTbtree(X) = \Omega\lrp{\frac{\LB(X)}{\log B}}$. \qedhere
\end{proof}


\begin{onlymain}
\section{Belga B-trees}
\label{s:ods}

In this section, we develop a dynamic B-tree data structure yclept \emph{Belga B-tree} that achieves a competitive ratio of $O(\log \log N)$, for search sequences of length $\Omega(N)$, provided that $1 + \log_B \log N = O(\log_B \log N)$, i.e. $B= (\log N)^{O(1)}$. Our construction is built upon the ideas used in~\cite{DBLP:journals/siamcomp/DemaineHIP07} to get a similar competitive ratio for binary search trees. Particularly, we crucially connect the cost of our algorithm to the \textit{interleave lower bound}. For completeness, we present here the setup and the necessary background regarding this lower bound.

 \paragraph{{\normalfont{\textbf{Interleave Lower Bound and preferred paths (See Figure~\ref{fig:preferred_path}).}}}} Let $\lbrace 1, \ldots ,N \rbrace$ be the keys stored in our B-tree. Let $P$ be a (fixed) complete binary search tree on those keys. For each internal node $v$ in $P$, we define its left region to be $v$ together with the subtree rooted at its left child and its right region to be the subtree rooted at its right child. Node $v$ has a \textit{preferred child}, which is left or right, depending on whether the last search for a node in its subtree was in its left or right region (if no node of the subtree rooted at $v$ has been searched, then $v$ has no preferred child). 
 
 We define a \textit{preferred path} in $P$ as follows: Start from a node that is not the preferred child of its parent (including the root) and perform a walk by following the preferred child of the current node, until reaching a node with no preferred child. Clearly, a preferred path contains $O(\log N)$ keys.

 Note that during a search for a key, the preferred child of some nodes that are ancestors of the node with the key being searched might change. Each change of preferred child, changes also the preferred paths of $P$. For a search sequence $X$, the interleave lower bound $\IB(X)$ equals the total number of changes of preferred child from left to right or from right to left, over all nodes of $P$.
 We use the following lemma of \cite{DBLP:journals/siamcomp/DemaineHIP07}, which is a slight variant of the first lower bound of
 \cite{DBLP:journals/siamcomp/Wilber89}:

 \begin{lemma}[Lemma 3.2 in \cite{DBLP:journals/siamcomp/DemaineHIP07}]
 \label{lem:interleave_preferred_connect}
 The cost to execute $X$ in the BST model is $\Omega(\IB(X))$ if $|X|=\Omega(N)$.
 \end{lemma}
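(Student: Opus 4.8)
The plan is to establish Wilber's first lower bound, specialized to the fixed reference tree $P$: any execution of $X$ in the BST model has cost at least $\tfrac12\,\IB(X) - O(N)$. Since the cost is also at least $|X|$ (the pointer is initialized once per search) and $|X| = \Omega(N)$, this gives cost $= \Omega(\IB(X))$ — if $\IB(X)$ is large relative to $N$ the first inequality dominates, otherwise cost $\ge |X| = \Omega(N) = \Omega(\IB(X))$.

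First I would write $\IB(X)$ as a sum over the nodes of $P$. Because $P$ is a BST, the subtree of a node $y$ spans a contiguous key interval $I_y$, split by the key of $y$ into the left region $L_y$ (the keys in $I_y$ not larger than that of $y$) and the right region $R_y$ (the rest), both intervals. Restricting $X$ to the accesses falling in $I_y$ gives a subsequence in which each term carries a tag $L$ or $R$; the number of positions at which consecutive tags differ is precisely the number $\IB_y(X)$ of left/right changes of $y$'s preferred child, and $\IB(X) = \sum_y \IB_y(X)$. It thus suffices to produce, over the whole execution, $\Omega\big(\sum_y \IB_y(X)\big)$ units of BST work.

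Fix any BST algorithm, with trees $T_0,\dots,T_m$ after the successive searches. For a node $y$ of $P$ and a time $t$ define the \emph{transition point} $\tau_y(t)$ as the minimum-depth node of $T_t$ whose key lies in $I_y$; since $I_y$ is an interval this node is unique (it equals $\mathrm{LCA}_{T_t}$ of the two keys bounding $I_y$) and it is an ancestor in $T_t$ of every node whose key lies in $I_y$. Two structural facts drive the argument: (i) whenever search $i$ reaches a key in $I_y$ — in particular whenever $x_i\in I_y$ — the node $\tau_y(i-1)$ lies on the search path and hence is touched during search $i$; and (ii) $\tau_y$ changes from one tree to the next only when some node with key in $I_y$ is touched, and when it changes the new transition point is again a node with key in $I_y$ that was touched — this is checked by examining how a rotation at a touched node can move the $\mathrm{LCA}$ of a fixed key interval, the only case producing a change being a rotation of the edge between the current transition point and its relevant child. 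Informally, $\tau_y$ can migrate across the key of $y$ only when searches force it to by changing $y$'s preferred child, so the preferred-child changes pay for the migrations.

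Using (i), charge each change of $y$'s preferred child, occurring at some search $i$, to the touch of $\tau_y(i-1)$ during that search. The remaining — and, I expect, hardest — step is to show this charging is $O(1)$-to-one, i.e.\ that $\sum_y \IB_y(X) = O(\text{number of touches}) = O(\text{cost})$ rather than $O(\log N)$ times the cost. The danger is that a single search $x_i$ is an $I_y$-access for all $\Theta(\log N)$ of the $P$-ancestors $y$ of $x_i$, whose transition points $\tau_y(i-1)$ are nested and occur at non-decreasing depths along the single search path to $x_i$, so a naive charge could pile many changes onto one touch. The resolution is to charge a change at $y$ not to $\tau_y(i-1)$ itself but to a carefully chosen touched node adjacent to it, and to exploit the strict nesting of the intervals $I_y$ — together with the restriction in (ii) on how transition points can move between two consecutive $I_y$-accesses — to force the chosen nodes, over the set of $P$-ancestors of $x_i$ whose preferred child actually changes at search $i$, to be pairwise distinct; summing over searches then yields $\sum_y \IB_y(X) \le O(\text{cost}) + O(N)$, the additive $O(N)$ being boundary terms (one per node $y$) that the hypothesis $|X| = \Omega(N)$ absorbs as above.
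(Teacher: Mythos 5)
The paper does not actually prove this lemma---it imports it verbatim from Demaine et al.\ (Lemma 3.2 of the Tango-trees paper), so the relevant comparison is with that proof. Your plan has the same architecture: decompose $\IB(X)=\sum_y \IB_y(X)$ over nodes $y$ of $P$, attach to each $y$ a ``transition point'' in the current BST, charge alternations to touches of transition points, and absorb the $O(N)$ boundary terms using $|X|=\Omega(N)$. Facts (i) and (ii) as you state them are correct and correspond to lemmas in that proof.

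The genuine gap is exactly the step you flag as hardest: you never establish that the charging is $O(1)$-to-one, and with your definition of the transition point it cannot be. Taking $\tau_y(t)$ to be $\mathrm{LCA}_{T_t}(I_y)$ makes the transition points of nested intervals collide massively: the root of $T_t$ is simultaneously $\tau_y(t)$ for every $P$-ancestor $y$ of its key, i.e.\ for $\Theta(\log N)$ nodes at once, so the naive charge loses a $\log N$ factor and yields only $\Omega(\IB(X)/\log N)$. The proposed repair---``charge to a carefully chosen touched node adjacent to it'' and ``exploit the strict nesting''---is a gesture at the difficulty, not an argument; nothing in your write-up identifies which adjacent node, nor why the chosen nodes are pairwise distinct across the $y$ whose preferred child flips at search $i$. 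The way the cited proof closes this hole is to \emph{change the definition}: let $\ell=\mathrm{LCA}_{T}(L_y)$ and $r=\mathrm{LCA}_{T}(R_y)$ separately; one is an ancestor of the other, and the transition point is the \emph{deeper} of the two. One then proves three lemmas: the transition point lies on the access path of every access to the region whose LCA it is; it is unchanged over any time window in which it is not touched; and, crucially, at any fixed time no node of $T$ is the transition point of two distinct nodes of $P$. This last distinctness lemma is precisely what your argument is missing, and between consecutive opposite-region accesses the (stable-unless-touched) transition point must be touched, giving cost $\ge \tfrac12\IB(X)-N$. Without proving a statement of this kind---or an explicit, verified alternative charging scheme---the proposal does not establish the lemma.
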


\paragraph{{\normalfont{\textbf{High-level overview of our structure.}}}} We store each preferred path in a balanced classic B-tree. We call such classic B-trees \textit{auxiliary trees}. Our dynamic B-tree will be a tree of classic B-trees. Recall that Lemma~\ref{lem:interleave_preferred_connect} essentially tells us that the number of preferred paths touched during a request sequence is a lower bound on the value of $\OPTbst$. The idea here is to show that for each preferred path touched, and thus unit of lower bound incurred, we can perform search and all update operations (cutting and merging preferred paths) with an overhead factor $O(\log_B \log N ) = O(\frac{\log \log N}{\log B})$. This will imply that we have a dynamic B-tree with cost $O(\frac{\log \log N}{\log B} \cdot \IB(X)) $. This combined with Lemma~\ref{lem:interleave_preferred_connect} and Corollary~\ref{cor:lower_bounds} implies that the cost of our dynamic B-tree data structure is $O(\log \log N) \cdot \OPT_{\Btree}$.

\begin{figure}[t]
\centering
\includegraphics[scale=0.85]{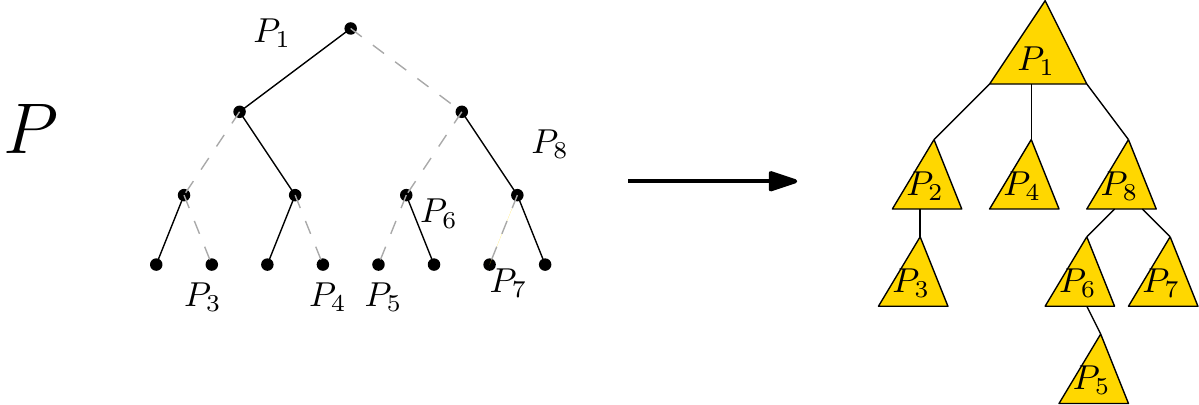}
\caption{Example of a reference tree $P$ and the tree-of-trees representation of its preferred paths $P_1,\dotsc,P_8$. Edges connecting different preferred paths are dashed gray.}
\label{fig:preferred_path}
\end{figure}

\paragraph{{\normalfont{\textbf{Auxiliary trees.}}}} Our auxiliary trees are augmented classic B-trees. Each auxiliary tree stores a preferred path. With each key $x$ we also store its depth in the reference tree $P$. We call this value depth of key $x$. Also, each node stores the minimum and maximum depth of a key in its subtree. Last, a node may be marked or unmarked, depending on whether it is the root of an auxiliary tree or not. Note that $P$ is just a reference tree used for the analysis. We do not need to store $P$ explicitly in order to implement our algorithm. All necessary information about $P$ is stored in our dynamic B-tree data structure. 

During an execution of a search sequence we need to perform the following operations on a preferred path:
\begin{enumerate}[(i)]
 \item Search for a key.
 \item Cut the preferred path into two paths, one consisting of keys of depth at most $d$ and the other of keys of depth greater than $d$.
 \item Merge two preferred paths $P_1$ and $P_2$, where the bottom node of $P_1$ is the parent of the top node of $P_2$.
\end{enumerate}

We will show that we can perform those operations using our auxiliary trees in time $O(1 + \log_B k)$, where $k$ is the number of keys in the involved preferred paths. We defer this proof to the end of this section and we now proceed to the description and analysis of Belga B-trees, assuming that those operations can be done in time $O(1 +\log_B k)$. For the rest of this section, whenever we refer to cutting/merging operations on auxiliary trees, we mean the implementation of cutting/merging the corresponding preferred paths in our B-tree data structure.


\paragraph{{\normalfont{\textbf{Our Algorithm.}}}} A Belga B-tree is a tree of auxiliary classic B-trees, where each auxiliary tree stores a preferred path. Initially we transform the input tree $T_0$ to a valid Belga B-tree. Upon a request for a key $x_i$, we start from the root and search for $x_i$. Whenever we reach a marked node $v$ (i.e. a root of an auxiliary tree), we have to update the preferred paths. Let $Q$ be the preferred path stored in the auxiliary tree of the parent of $v$ and $R$ the preferred path in the auxiliary tree rooted at $v$. We update the preferred paths using the cut and merge operations of auxiliary trees. Particularly, if $d$ is the minimum depth of a key of $R$ (this value is stored at node $v$ of our B-tree), we cut the auxiliary tree storing $Q$ at depth $d-1$. This gives us two preferred paths $Q_{d-}$ and $Q_{d+}$, where the first stores keys of $Q$ of depth at most $d-1$ and the second keys of $Q$ of depth greater than $d$. We mark the roots of the auxiliary trees corresponding $Q_{d-}$ and $Q_{d+}$. We then merge the auxiliary tree storing $Q_{d-}$ with the auxiliary tree rooted at $v$ (which stores $R$). We mark the root of the new tree and continue the search for $x_i$.

Note that the only part where our algorithm needs to perform rotations is the initial step of transforming the input tree into a Belga B-tree.

\paragraph{{\normalfont{\textbf{Bounding the cost.}}}} We now compare the cost of our Belga B-tree data structure to that of the optimal offline B-tree. The following lemma makes the essential connection between the number of preferred paths touched during a search and the cost of our algorithm.

\begin{lemma}
\label{lem:tango_wilber1}
Let $\ell$ be the number of preferred child changes during a search for key $x_i$. Then the cost of Belga B-tree for searching $x_i$ is $O((\ell+1)(1+\log_B \log N))$.  
\end{lemma}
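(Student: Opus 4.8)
The plan is to account for the work done during the search for $x_i$ by charging it to the preferred paths that are visited (equivalently, to the auxiliary trees through which the search pointer passes), and to show that each such visit costs only $O(1+\log_B\log N)$.

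First I would observe that, since the reference tree $P$ is a complete binary tree on $N$ keys, every preferred path contains $O(\log N)$ keys, so each auxiliary tree stores $k=O(\log N)$ keys and has height $O(\log_B k) = O(\log_B\log N)$. Next I would bound the number of auxiliary trees the search pointer enters. Each time the pointer descends from one auxiliary tree into a child auxiliary tree (crossing a marked node), either this descent corresponds to a preferred-child change along the root-to-$x_i$ path in $P$, or it corresponds to following an edge along which the preferred child does not change. A standard argument (as in the Tango tree analysis) shows that the number of distinct preferred paths intersected by the search path to $x_i$ is $O(\ell+1)$: the root-to-$x_i$ path in $P$ is partitioned into maximal subpaths each lying inside a single preferred path, and a new preferred path is entered only when the preferred child at the current node disagrees with the direction of the search, which happens exactly $\ell$ times (plus the initial path containing the root). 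Hence the search visits $O(\ell+1)$ auxiliary trees.

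Then I would bound the cost incurred at each visited auxiliary tree. Within a single auxiliary tree, the pointer performs a search for (the relevant portion of the path to) $x_i$, which by the assumed performance of auxiliary trees costs $O(1+\log_B k) = O(1+\log_B\log N)$. Moreover, at each marked node encountered we must update the preferred paths: we cut the parent's preferred path $Q$ at depth $d-1$ into $Q_{d-}$ and $Q_{d+}$ and merge $Q_{d-}$ with the child path $R$. By the assumed bounds, each cut and each merge costs $O(1+\log_B k) = O(1+\log_B\log N)$, and a constant number of such operations is performed per visited auxiliary tree. Marking and unmarking roots is $O(1)$. Summing over the $O(\ell+1)$ visited auxiliary trees gives total cost $O((\ell+1)(1+\log_B\log N))$, as claimed.

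The main obstacle, and the step requiring the most care, is the combinatorial claim that the search path to $x_i$ meets only $O(\ell+1)$ preferred paths — i.e. correctly relating the transitions between auxiliary trees in the Belga B-tree to preferred-child changes in $P$. One must be careful that a preferred-child change counted toward $\ell$ really is "charged" by at most a constant number of auxiliary-tree entries, and conversely that every auxiliary-tree entry beyond the first is witnessed by such a change; this is exactly the bookkeeping underlying Wilber's first bound and the Tango tree construction, and it carries over here because the tree-of-trees structure over preferred paths is identical — only the per-path data structure (a classic B-tree instead of a balanced BST) has changed, which affects only the per-visit factor $\log_B\log N$ versus $\log\log N$.
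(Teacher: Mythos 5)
Your proposal is correct and follows essentially the same route as the paper's proof: charge $O(1+\log_B\log N)$ search cost to each of the $\ell+1$ preferred paths touched (each auxiliary tree being a balanced classic B-tree on $O(\log N)$ keys), and charge one cut plus one merge, each costing $O(1+\log_B\log N)$, to each of the $\ell$ preferred-child changes. The only difference is that you spell out the Tango-style argument for why exactly $\ell+1$ paths are visited, which the paper states without elaboration.
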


\begin{proof}
To search for $x_i$, we touch exactly $\ell+1$ preferred paths. We account separately for the search cost and the update cost. 

For each preferred path touched, the search cost is $O(\lceil \log_B \log N \rceil)$, since we are searching a balanced B-tree on $O(\log N)$ keys. Thus the total search cost is clearly $O((\ell+1)(1+\log_B \log N))$. 

We now account for the update cost. Recall that we can cut and merge preferred paths on $k$ keys in time $O(1 +\log_B k)$. Since each preferred path has at most $O(\log N)$ keys, we can perform those updates in time $O(1 +\log_B \log N)$. There are $\ell$ preferred path changes, and for each change we perform one cut and and one merge operation, we get that the total time for merging and cutting is $O(\ell \cdot (1 +\log_B \log N))$.  The lemma follows. 
\end{proof}

We now combine this lemma with Corollary~\ref{cor:lower_bounds} to get the competitive ratio of Belga B-trees.

\begin{theorem}
\label{thm:tango_comp}
For any search sequence of length $m=\Omega(N)$, Belga B-trees are $O(\log \log N)$-competitive.
\end{theorem}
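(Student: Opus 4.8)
The plan is to sum the per-search estimate of Lemma~\ref{lem:tango_wilber1} over the whole request sequence and then match the resulting total against a lower bound on $\OPTbtree(X)$ built from the interleave bound. Let $\ell_i$ denote the number of preferred-child changes triggered by the search for $x_i$. By the definition of the interleave bound, $\sum_{i=1}^{m}\ell_i=\IB(X)$, and there are $m$ searches, so summing Lemma~\ref{lem:tango_wilber1} gives that the total cost of the Belga B-tree on $X$ is
\[
 \sum_{i=1}^{m}O\bigl((\ell_i+1)(1+\log_B\log N)\bigr)=O\bigl((\IB(X)+m)\,(1+\log_B\log N)\bigr),
\]
plus the one-time $O(m)$ cost of converting the input tree $T_0$ into a valid Belga B-tree (this is the only place rotations are used, and it is absorbed since $m=\Omega(N)$).

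Next I would lower bound $\OPTbtree(X)$ in two complementary ways. Trivially, every search costs at least one unit (the pointer is initialized at the root), so $\OPTbtree(X)\ge m$. And since $m=\Omega(N)$, Lemma~\ref{lem:interleave_preferred_connect} says that $\IB(X)$ (up to a constant) is a valid lower bound on the BST cost of $X$; feeding this into Corollary~\ref{cor:lower_bounds} with $\LB(X)=\IB(X)$ yields $\OPTbtree(X)=\Omega(\IB(X)/\log B)$. Since $\log B\ge 1$, adding the two bounds gives $\IB(X)+m=O(\log B)\cdot\OPTbtree(X)$. Combining this with the displayed cost bound, the Belga B-tree cost on $X$ is $O\bigl(\log B\,(1+\log_B\log N)\bigr)\cdot\OPTbtree(X)$.

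It remains to simplify the overhead using the standing hypothesis on $B$. Since $\log B\cdot\log_B\log N=\log\log N$, we have $\log B\,(1+\log_B\log N)=\log B+\log\log N$, and this is $O(\log\log N)$ exactly when $\log B=O(\log\log N)$, i.e. $B=(\log N)^{O(1)}$ --- precisely the regime in which the Belga B-tree is defined and analyzed. Hence the total cost is $O(\log\log N)\cdot\OPTbtree(X)$, which is the claim.

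At this point there is no deep obstacle: the theorem is a short calculation once Lemma~\ref{lem:tango_wilber1}, Lemma~\ref{lem:interleave_preferred_connect} and Corollary~\ref{cor:lower_bounds} (hence Theorem~\ref{thm:opt_relation}) are available. The only points requiring a moment of care are (i) remembering the additive $+m$ term --- it is what dominates sequences whose interleave bound is small (for instance repeatedly accessing a key near the root of the reference tree $P$), and it is handled by the trivial bound $\OPTbtree(X)\ge m$ rather than by the interleave bound; and (ii) noting that the hypothesis $m=\Omega(N)$ is genuinely used twice, both to invoke Lemma~\ref{lem:interleave_preferred_connect} and to absorb the cost of building the initial structure. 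The substantive work --- the $O(1+\log_B\log N)$ overhead per touched preferred path and the $O(\log B)$-overhead BST simulation --- has already been done.
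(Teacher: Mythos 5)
Your proof is correct and follows essentially the same route as the paper's: sum Lemma~\ref{lem:tango_wilber1} over the searches, lower-bound $\OPTbtree(X)$ both by $m$ and by $\Omega(\IB(X)/\log B)$ via Lemma~\ref{lem:interleave_preferred_connect} and Corollary~\ref{cor:lower_bounds}, and simplify the overhead using $B=(\log N)^{O(1)}$. One small imprecision: $\sum_i \ell_i$ is not exactly $\IB(X)$ but only at most $\IB(X)+N$, since each node's first preferred-child change (from null to left or right) is not counted in $\IB(X)$; this is harmless here because $N=O(m)$ and the additive term is absorbed exactly as your $+m$ term is.
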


\begin{proof}
We account only for the cost occured during searches, since the cost of transforming the input tree into a Belga B-tree is just a fixed additive term which does not depend on the input sequence.

The total number of preferred path changes is at most $\IB(X)+N$. The additive $N$ accounts for the fact that initially each node has no preferred child, so its first change from null to either left or right is not counted in $\IB(X)$. Using Lemma~\ref{lem:tango_wilber1} and summing up over all search requests, we get that the cost of Belga B-trees is $O((\IB(X)+N+m)(1+\log_B \log N))$. By our assumption on the value of $B$, we have that $1 + \log_B \log N = O(\log_B \log N)$, thus the cost is in $O((\IB(X)+N+m) \cdot \frac{\log \log N}{\log B})$. By Lemma~\ref{lem:interleave_preferred_connect} this is bounded by $(\OPTbst(X)+N+m) \cdot \frac{\log \log N}{\log B} $. Using Corollary~\ref{cor:lower_bounds} we get that cost of Belga B-tree is
\[  O\lrp{(\log B \cdot \OPTbtree+N+m) \cdot \frac{\log \log N}{\log B}}.  \]

Note that for any request sequence $\OPTbtree \geq m$. Since $m=\Omega(N)$, we have that $ \log B \cdot \OPTbtree + N +m = O(\log B \cdot \OPTbtree)$. We get that the total cost is upper bounded by \[  O \lrp{ \log B \cdot \OPTbtree \cdot \frac{\log \log N}{\log B}  }  = O(\OPTbtree \cdot \log \log N). \] 
\end{proof}

\paragraph{{\normalfont{\textbf{Operations on auxiliary trees in logarithmic time.}}}}We now show that our auxiliary B-trees support search, cut and merge in time $O(1 + \log_B k)$, where $k$ is the total number of nodes in the trees which are involved.

Before proceeding to this proof we note that classic B-trees on $k$ nodes support search, split and concatenate (similar to the ones we presented in previous section for red-black trees) operations in time $O(1 +\log_B k)$ (see~\cite{CLRS}, Chapter 18). For completeness we describe here the split and concatenate operations:

\begin{itemize}
 \item Splitting a B-tree at a key value $x$ consists of creating a tree where the root contains only $x$, its left subtree is a B-tree on keys with value smaller than $x$ and the right subtree is a B-tree on keys greater than $x$. 
 \item Concatenating two classic B-trees $T_1,T_2$ with a key value $k$ such that all keys in $T_1$ are smaller than $k$ and all keys in $T_2$ are greater, consists of creating a new classic B-tree $T$ which contains all key values contained in $T_1$, $T_2$ and $k$.
\end{itemize}

Search can be clearly performed in time $O(1 + \log_B k)$. We now describe the cut and merge operations on preferred paths.

\textit{Cut a preferred path at depth $d$:} Let $R$ be the tree storing the preferred path. Let $\ell$ and $r$ be the smallest and the largest key value respectively stored at depth greater than $d$ in the path. We wish to find $\ell$ and $r$ in the tree $R$. This can be easily done using the maximum depth value of subtree stored in the nodes. We show how to find $\ell$ and for $r$ is symmetric. Start from the root and move to the leftmost child whose maximum depth is greater than $d$. When we reach a node $v$ such that all its children have maximum depth smaller than $d$, then $\ell$ is the smallest key in $v$ with depth greater than $d$. Let $\ell'$ predecessor of $\ell$ in $R$ (if it has one) and $r'$ the successor of $r$ in $R$ (if it has one). Split $R$ at $\ell'$ (skip this step if $\ell'$ does not exist) and then split the right subtree at $r'$ (skip this step if $r'$ does not exist). Now, the left subtree of $r'$ contains all keys with depth greater than $d$. Let us call this tree $D$. Mark the root of $D$ (and change values of depths,  max depth, min depth in time $O(1+ \log_B k)$) and then use concatenate operations at the tree rooted at $r'$ (if it exists) and then at the tree rooted at $\ell'$ (if it exists) to make the remaining of $R$ a valid classic B-tree.


\textit{Merge two preferred paths:} Let $P_1$ and $P_2$ be the preferred paths that we want to merge, where the bottom node of $P_1$ is the parent of the top node of $P_2$. Merging is the inverse operation of a cut. Let $U$ and $V$ be the auxiliary trees storing $P_1$ and $P_2$ respectively, i.e $U$ is a parent of $V$ in our tree-of-trees construction and the key values stored at $U$ have are of smaller depth in $P$ than the key values stored in $V$. Pick a key from the root of $V$ and find its predecessor $\ell$ and its successor $r$ in $U$. Split $U$ in $\ell$ (skip this step if $\ell$ does not exist) and then split the right subtree at $r$ (skip this step if $r$ does not exist). Now the left subtree of $r$ is $V$. Unmark the root of $V$. Then, concatenate at $r$ to get a resulting tree $R$ which is the right subtree of the root $\ell$ (skip this step if $r$ does not exist). Then, concatenate at $\ell$ (if it exists), to get a valid B-tree which contains all keys of $U$ and $V$. In each of the last two steps (if not skipped), updates of the values of depth, maximum depth, minimum depth take time $O(1 +\log_B k)$. 

\end{onlymain}

\begin{onlymain}
\section{Transforming any static BST into the B-Tree model} \label{s:tranform}


In this section we focus on static trees, with the goal to simulate a static BST using a static B-tree and achieving a speedup by a factor of $\Theta(\log B)$. 
In the static BST and B-Tree models, all that is allowed in each operation is to move a single pointer around the tree, starting at the root, each time moving to a neighboring node, at unit cost per move. We refer to a sequence of moves of a single pointer as a \emph{walk}.
In particular, given a BST we wish to convert it to a B-Tree so that if a walk in the BST costs $k$, a walk in the B-Tree $T_B$ that touches the same keys costs as little as possible in terms of $k$; $k$ is clearly possible since a BST is a B-tree, but when can we achieve $o(k)$?

We note that the results of this section allow the pointer to move arbitrarily in a static BST/B-tree, i.e., it can visit nodes that are outside the path from the root to the searched node. In the case where only a search path of length $D$ is considered, the worst-case cost has been completely characterized in~\cite{DBLP:journals/algorithmica/DemaineIL15}  as
$\Theta \left( {D \over \lg (1{+}B)} \right)$ when $D = O(\lg N)$, $\Theta\left( {\lg N \over \lg \left(1{+}{B \lg N \over D}\right)} \right)$, when 
$D = \Omega(\lg N)$ and $D = O(B \lg N)$, and
$\Theta\left( {D \over B} \right)$ when $D = \Omega(B \lg N)$.


\paragraph{{\normalfont{\textbf{Block-Connected Mappings.}}}} The most natural approach to achieve our goal is to try to map a static BST $T$ into a static B-tree $T_B$ such that each node of $T_B$ corresponds to a connected subtree of $T$. We call such a mapping $f: T \rightarrow T_B$, \textit{block-connected}. Observe that in order to achieve a $\Omega(\log B)$ speedup for the B-tree model $T_B$, it is necessary that a block-connected mapping $f$ should satisfy that every node at depth $d$ in $T$ is at depth $O(\frac{k}{\log B})$ in $T_B$. However, as we will see, this is not sufficient. 

The next theorem shows that, perhaps surprisingly, this approach fails to give any super-constant factor improvement, given that the mapping is deterministic. Afterwards, we show how to achieve an $\Omega(\log B)$ factor speedup using randomization. 

\begin{theorem}
\label{thm:det-static}
 There does not exist a block-connected mapping $f: T \rightarrow T_B$ such that any walk $P$ on $T$ of length $k$ corresponds to a walk of length $o(k)$ in $T_B$.
\end{theorem}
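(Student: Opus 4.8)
The plan is to exhibit, for every candidate block-connected mapping $f: T \to T_B$, a single worst-case BST $T$ together with a walk $P$ on it whose length $k$ is essentially preserved (up to constants) by $f$. Since $f$ is block-connected, the preimage of each B-tree node $w$ is a connected subtree $f^{-1}(w)$ of $T$ with at most $B-1$ keys, so it has depth at least $\Omega(\log B)$; this is exactly the ``necessary but not sufficient'' observation stated before the theorem. The intuition for why this is not sufficient is that a walk in $T$ can be forced to repeatedly \emph{cross the boundary} between two adjacent blocks, and each boundary crossing in $T_B$ costs a full pointer move (unit cost) just as in $T$ — so if $P$ is designed to spend most of its moves crossing block boundaries rather than travelling inside blocks, $f$ buys nothing.

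Concretely, I would take $T$ to be a complete (or near-complete) BST on $N$ keys, and then argue about the partition of $T$'s nodes into blocks $\{f^{-1}(w)\}_w$. The key combinatorial fact is that since each block has $O(B)$ nodes, there are $\Omega(N/B)$ blocks, and hence $\Omega(N/B)$ inter-block edges in $T$ — edges $(u,v)$ of $T$ with $f(u) \neq f(v)$, i.e., corresponding to a parent-child edge of $T_B$. I want to find a short walk in $T$ that is ``all boundary'': one natural choice is to pick a root-to-leaf path (or a carefully chosen path of length $\Theta(\log N)$) that passes through $\Omega(\log N / \log B)$ distinct blocks — which always exists, since $T$ has depth $\Theta(\log N)$ and each block contributes depth only $O(\log B)$ to any path — but that alone only gives the trivial $\Theta(\log B)$-type bound. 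To defeat \emph{any} $f$, I instead want to amplify: take a walk $P$ that bounces back and forth across a \emph{single} carefully selected inter-block edge $e=(u,v)$ of $T$, say going $u \to v \to u \to v \to \cdots$ a total of $k$ times. This walk has length $k$, touches only the two keys at the endpoints (so the set of keys touched is fixed and tiny), and in $T_B$ the images $f(u)$ and $f(v)$ are adjacent-but-distinct nodes, so any walk in $T_B$ touching both keys $u$-value and $v$-value at least once per ``bounce'' still costs $\Omega(k)$. The subtlety is that in $T_B$ one does not necessarily have to return all the way between bounces if the walk definition only requires the searched keys to be touched — so the formal argument should fix the semantics: a walk that touches the same \emph{multiset}/sequence of keys, i.e., must touch $u$'s key then $v$'s key then $u$'s key $\ldots$ in order, which forces $\Omega(k)$ moves in $T_B$ because $f(u)\neq f(v)$ are distinct nodes and each alternation requires leaving one and entering the other.

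The main obstacle I anticipate is pinning down exactly what ``corresponds to the same walk'' means and making sure the adversarial walk cannot be shortcut in $T_B$ by the mapping cleverly making $f(u)$ and $f(v)$ the \emph{same} node — but block-connectivity plus the degree/size constraint $n(w)\le B-1$ rules this out: $f$ must split $T$ into more than one block (since $N > B$), so there \emph{exists} at least one edge of $T$ straddling two blocks, and I get to choose the adversarial walk \emph{after} seeing $f$, hence after seeing where a straddling edge is. So the quantifier order ``for all $f$, there exists $P$'' is on my side. A second, more cosmetic obstacle is handling the case $N \le B$ (trivial: the whole tree fits in one block and there is nothing to prove, or rather the statement is about the general/asymptotic regime $N \gg B$) and making the $o(k)$ vs $\Omega(k)$ phrasing precise — I would state it as: there is an infinite family of instances on which any block-connected $f$ forces walk-length $\Omega(k)$, so no $f$ can achieve $o(k)$ uniformly. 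I would also double-check that the chosen straddling edge can indeed be traversed in $T$ in unit cost per step (it can — it is a genuine parent-child edge of the BST), and that in $T_B$ moving between $f(u)$ and $f(v)$, which are in a parent-child relationship, is also a single move, so the cost ratio is exactly $1$, not even a constant factor saved.

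Summarizing the steps in order: (1) fix an arbitrary block-connected $f:T\to T_B$ on a large complete BST $T$; (2) observe $N>B$ forces $f$ to use $\ge 2$ blocks, hence some edge $e=(u,v)$ of $T$ has $f(u)\neq f(v)$ with $f(u),f(v)$ in parent-child relation in $T_B$; (3) define the adversarial walk $P$ of length $k$ that alternates $u,v,u,v,\dots$ along $e$, touching only the two keys at $u$ and $v$, in an alternating order; (4) argue that any walk in $T_B$ realizing the same ordered sequence of touched keys must, between consecutive alternations, move from $f(u)$ to $f(v)$ or back, costing $\ge 1$ each, so its length is $\Omega(k)$; (5) conclude the claimed $o(k)$ speedup is impossible for any deterministic block-connected mapping.
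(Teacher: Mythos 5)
There is a genuine gap, and it lies exactly where you flagged it: your argument only works if ``corresponds'' is taken to mean that the walk in $T_B$ must touch the same keys \emph{in the same order, with multiplicity}. The paper's notion (stated just before the theorem) is that the $T_B$-walk need only touch the same \emph{set} of keys, and under that reading your construction collapses: the bouncing walk $u,v,u,v,\dots$ of length $k$ touches only the keys on the root-to-$u$ path together with the key of $v$, a set of $O(\log N)$ keys lying in $O(\log N)$ blocks, so $T_B$ admits a corresponding walk of cost $O(\log^2 N) = o(k)$ once $k$ is large. A further warning sign is that under your order-preserving semantics the statement becomes essentially vacuous --- it holds for \emph{any} mapping into \emph{any} B-tree with at least two nodes, and never uses the block-connectivity hypothesis beyond ``there exist two blocks,'' whereas the theorem is specifically a statement about block-connected mappings.

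The paper's proof is robust to the weaker (set) semantics and uses block-connectivity essentially. Take $T$ perfectly balanced with $\ell=(N+1)/2$ leaves; the leaves occupy $b\ge \ell/B$ distinct blocks. Run an inorder traversal but prune it: do not recurse past the first node encountered that lies in the same block as some leaf. Because each block is a \emph{connected} subtree of $T$, two nodes in the same leaf-block have their LCA in that block as well, so the pruned traversal $E'$ enters each of the $b$ leaf-blocks exactly once; its BST length is $\Theta(b)$. Yet any $T_B$-walk touching the same key set must visit $b$ distinct B-tree nodes, costing $\Omega(b)$ --- a constant-factor ratio, hence no $o(k)$ speedup. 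If you want to salvage your approach, you need an adversarial walk whose \emph{key set} (not just its move sequence) is spread over $\Omega(k)$ blocks while the walk itself has length $O(k)$; that is precisely what the pruned inorder traversal achieves and what a single straddling edge cannot.
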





\begin{proof}
We proceed by contraction. Assume an $f$ and $N=2^i-1$ for some integer $i$, and let $T$ be the perfectly balanced tree with $N$ nodes and thus $\ell=\frac{N+1}{2}$ leaves. Consider some BST model sequence of operations $E$ which is an inorder traversal of $T$. Let $b$ be the number of different blocks (i.e. B-tree nodes) that $f(T)$ stores the leaves of $T$ in, which must be at least $\frac{\ell}{B}$. Let $E'$ be the sequence of operations where the inorder traversal does not recurse whenever it encounters a node stored in the same block as a leaf. $E'$ will still visit all $b$ blocks containing leaves, but its length will be exactly $2b-1$. This happens because the block-connected property ensures that $E'$ will never visit two nodes, both of which are in the same block as a leaf of $T$, as that would imply they would have an LCA also in the block, which would mean $E'$ would not visit them. Thus $E'$ has a BST cost of $2b-1$ and a B-tree cost of $\Theta(b)$, where $b=\Omega(\frac{N}{b})$ which proves the theorem. \qedhere
\end{proof}


\paragraph{{\normalfont{\textbf{Randomized Construction.}}}} Theorem~\ref{thm:det-static} above is based on an adversarial argument and relies crucially on the knowledge of the layout of the B-tree. To overcome this issue, we use randomization.

\begin{theorem}
 \label{thm:static_rand} 
 For any BST $T$, there is a randomized block-connected mapping which produces a static B-tree $T_R$ such that for any walk of length $k$ in $T$, there exists a corresponding walk in $T_R$ with expected cost $O\lrp{\frac{k}{\log B}}$.
\end{theorem}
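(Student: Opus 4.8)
The plan is to randomize the partition of the BST $T$ into blocks so that, with high probability over the random choice, every short walk in $T$ crosses only a few block boundaries, and hence lifts to a short walk in $T_R$. The key idea is a randomized level-based decomposition: pick a random offset $s \in \{0,1,\dots,\lfloor \tfrac{1}{2}\log B\rfloor - 1\}$ uniformly, and then cut $T$ into horizontal slabs of $\lfloor \tfrac{1}{2}\log B\rfloor$ consecutive depths, with the first slab having depths $0,\dots,s-1$ and each subsequent slab spanning the next $\lfloor \tfrac{1}{2}\log B\rfloor$ depths. Within a slab, a node together with its descendants that remain in the same slab forms a connected subtree of at most $2^{\lfloor \tfrac{1}{2}\log B\rfloor}\le \sqrt{B}$ nodes, which fits in a single B-tree node; the parent/child structure between slabs determines the B-tree edges. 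This is a block-connected mapping by construction, and the resulting $T_R$ has the property that a node at depth $d$ in $T$ sits at depth $O(d/\log B)$ in $T_R$.

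The main step is the walk-cost analysis. First I would argue that it suffices to handle a walk $P$ that is a root-to-node path of length $k$: a general walk decomposes into ``down'' and ``up'' segments, and the B-tree cost of traversing any contiguous segment is within a constant factor of the number of distinct blocks it meets, which in turn is dominated by the number of block boundaries it crosses. For a root-to-node path of length $k = d$, the number of block boundaries crossed is exactly the number of slab boundaries in the depth range $[0,d]$, which is $\lceil (d-s)/\lfloor \tfrac12\log B\rfloor\rceil \le d/\lfloor \tfrac12 \log B\rfloor + 1 = O(d/\log B)$ \emph{deterministically} for this particular decomposition. So for root-to-node walks no randomization is even needed; the subtlety is the general walk. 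For a general walk of length $k$, the natural bound is that the number of blocks visited is at most (number of down-steps that cross a slab boundary) $+$ (number of up-steps that cross a slab boundary) $+ 1$, and each step crosses a boundary only if it moves between two consecutive depths straddling a cut. Over the random offset $s$, any fixed edge of $T$ straddles a cut with probability $\frac{1}{\lfloor \tfrac12\log B\rfloor} = O(1/\log B)$; summing over the $\le k$ edges traversed by $P$ (with multiplicity) and using linearity of expectation gives expected number of boundary-crossing steps $O(k/\log B)$, hence expected B-tree walk cost $O(k/\log B)$.

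The part requiring the most care is making precise that ``B-tree cost of a walk $\le$ constant times number of boundary crossings'', since a walk in $T$ that wanders up and down within a single B-tree node costs nothing extra in $T_R$ (the pointer just stays put in that block), while each boundary crossing in $T$ corresponds to exactly one parent/child move in $T_R$; one must also check that the simulating walk in $T_R$ indeed touches the corresponding keys, which follows from block-connectedness (every key of $T$ lies in exactly one block, and consecutive blocks along the walk are parent/child in $T_R$, so the pointer reaches each needed block). I would also note the one edge case: a block may not be \emph{full} ($\le \sqrt B \le B-1$ keys per node, fine) and the root slab may be short, which only helps. Combining the two displays, the randomized construction yields a block-connected $T_R$ with a corresponding walk of expected cost $O(k/\log B)$, as claimed.
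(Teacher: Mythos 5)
Your proposal is correct and is essentially the paper's own argument: a uniformly random vertical offset for the first slab, slabs of $\Theta(\log B)$ consecutive depths each packed into one B-tree node (giving a block-connected mapping), and linearity of expectation over the probability $O(1/\log B)$ that each traversed edge of $T$ straddles a slab boundary. The only differences are cosmetic --- you use slabs of $\lfloor\frac12\log B\rfloor$ levels (blocks of size $\le\sqrt B$) where the paper uses $\lfloor\log B\rfloor$ levels, and you spell out the walk-cost accounting for general up-and-down walks in more detail than the paper does.
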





\begin{proof}
We construct the B-tree $T_R$ as follows. We choose uniformly at random an integer $h$ in $[0,\lfloor \log B \rfloor -1 ]$. The root node of $T_R$ contains the key values of the first $h$ levels of $T$. Then, we build the rest of the tree in a deterministic way, by storing $\lfloor \log B \rfloor - 1$ levels of each subtree in a B-tree node, recursively.
Consider any walk $P$ of $k$ operations on $T$ that starts at the root. We assume that the block containing the root and the current location of the walk are stored in memory. Whenever $P$ passes through an edge $e$ of $T$, the probability that this move corresponds to a unit cost operation equals the probability that the endpoints of $e$ belong to different B-tree nodes in $T_R$ and equals $ 1/\lfloor \log B \rfloor $. 

We thus obtain that the expected cost of the corresponding sequence of operations in in $T_R$ is $ k / \lfloor \log B \rfloor$. Since $\lfloor \log B \rfloor = \Omega(\log B)$ for any $B \geq 3$, we get that the expected cost is $O(\frac{k}{\log B})$. \qedhere

\end{proof}


\end{onlymain}

\section{Open Problems} \label{s:open}

We conclude with some open problems. The first is that our Belga B-trees are $O(\log \log N)$-competitive only when $B=\log^{O(1)} N$, and thus the case of large $B$ where $B=\log^{\omega(1)} N$ remains open. The main impediment is to figure out how to fit multiple preferred paths into one block.

A more general open problem is to resolve the following conjecture: Is it possible to convert any BST-model algorithm into a B-Tree model algorithm such that if an algorithm costs $O(k)$ in the BST model, it costs $O(\frac{k}{\log B}+1)$ in the B-Tree model? Special cases of this theorem, when applied to, for example, splay trees and greedy future, would also be interesting should the general conjecture prove too difficult to resolve.

A third open problem is whether, given two B-tree model algorithms, can you achieve the runtime that is the minimum of them; this would be the B-Tree model analogue of the BST result of \cite{DBLP:conf/icalp/DemaineILO13}. It would also allow one to then combine Belga B-trees with other B-tree model algorithms to get stronger results, like, for example \cite{DBLP:conf/soda/BoseDL08} to add the working-set bound; in the BST model \cite{DBLP:conf/soda/WangDS06} gave a $O(\log \log N)$-competitive BST with the working set bound.


\bibliographystyle{alpha}

\bibliography{references_tango,Iacono-John} 

\newpage
\appendix




\end{document}